\newtheorem{lemma}{Lemma}
\newtheorem{theorem}{Theorem}
\newtheorem{definition}{Definition}
\newenvironment{proof}{\noindent\textit{Proof.}\ }{$\Box$ \\[-0.02\textheight]}
\title{\vspace*{-0.065\textheight}
\Large\bf The Rank Convergence of HITS Can Be Slow}
\author{Enoch Peserico and Luca Pretto
\footnote{\texttt{\{enoch, pretto\}@dei.unipd.it} - Dipartimento
di Ingegneria dell'Informazione, Universit{\`a} di Padova, Italy.
The first author was supported in part by MIUR under PRIN
Mainstream and by EU under Integr. Proj. AEOLUS (IP-FP6-015964)
The second author was supported in part by Proj. SAPIR within EU
Comm. IST Proj. (Contr. IST-045128).}}
\date{}
\begin{document}

\maketitle

\vspace{0.04\textheight}
\begin{minipage}[c]{0.91\textwidth}
\normalsize{\textbf {Abstract.} We prove that HITS, to ``get
right'' $h$ of the top $k$ ranked nodes of an $N\geq 2k$ node
graph, can require $h^{\Omega(N\frac{h}{k})}$ iterations (i.e. a
substantial $\Omega(N\frac{h\log h}{k})$ matrix multiplications
even with a ``squaring trick''). Our proof requires no algebraic
tools and is entirely self-contained.
\newline
\textbf{Keywords:}~Algorithm analysis;~Information retrieval;~Rank
convergence.}
\end{minipage}
\vspace{0.035\textheight}
\section{\large HITS}
\label{sec:hits}
Kleinberg's celebrated HITS algorithm~\cite{K99} ranks the nodes of a
generic graph in order of ``importance'' based solely on the graph's
topology. Originally proposed to rank web pages
in order of {\em authority} (and still the basis of some search
engines such as Ask~\cite{ask}),
it has been adapted to many different application domains,
such as topic distillation~\cite{CDGKRRT98}, word
stemming~\cite{BFM02b}, automatic synonym extraction in a
dictionary~\cite{B04}, item selection~\cite{Wang&2002}, and author
ranking in question answer portals~\cite{Jurczyk&2007}
(to name just a few - see also
~\cite{Kurland&2006,Mizzaro&2007,Oyama&2006,Kimelfeld&2007}).

The original version of HITS works as follows. In response to a
query, a search engine retrieves a set of nodes of the web graph
on the basis of pure textual analysis; for each such node it also
retrieves all nodes pointed by it, and up to $d$ nodes pointing to
it. Then HITS associates an authority score $a_{i}$ (as
well as a \emph{hub} score $h_{i}$) to each node $v_i$ of this
\emph{base set}, and iteratively updates these scores according to
the formulas:
\begin{equation}
h_{i}^{(0)} = 1 \qquad
a_{i}^{(k)} = \sum_{v_j \rightarrow v_i}h_{j}^{(k-1)} \qquad
h_{i}^{(k)} = \sum_{v_i \rightarrow v_j}a_{j}^{(k)}, \qquad
k=1,2,\ldots \label{equa:for1}
\end{equation}
where $v \rightarrow u$ denotes that $v$ points to $u$. At each
step the authority and hub vector of scores are normalized in
$\|\cdot\|_2$.

Intuitively, HITS places a pebble on each node of the \emph{base
set} graph. At odd timesteps, each pebble on node $v$ sires a
pebble on every node $u$ such that $v\rightarrow u$, and at even
timesteps each pebble on node $v$ sires a pebble on every node $u$
such that $u\rightarrow v$ (a pebble is removed upon siring its
children). Then, without normalization, $a^{(k)}_i$ equals the
number of pebbles on $v_i$ at time $2k-1$ and $h^{(k)}_i$ that at
time $2k$.

\section{\large Convergence in Score vs.\ Convergence in Rank}
\label{sec:convergence}

HITS essentially computes a dominant eigenvector of $A^TA$, where
$A$ is the adjacency matrix of the base set, using the power
method~\cite{GVL96} - thus, the convergence rate of the hub and
authority score vectors are well known~\cite{AP05}. Nevertheless,
what is often really important~\cite{Peserico&2007} is the time
taken by HITS to {\em converge in rank}: intuitively after how
many iterations nodes no longer change their relative rank. A
formalization of this intuition is more challenging than it might
appear~\cite{Peserico&2007}. For the purposes of this paper we
define convergence in rank as follows:

\begin{definition}
\label{def:convergence} Consider an iterative algorithm
ALG providing at every iteration $t\geq 0$ a score vector
$\mathbf{v}^t=[v^t_1,\dots,v^t_N]$ for the $N$ nodes
$v_1,\dots,v_N$ of a graph; and let the set of the (weakly) top
$k$ nodes at step $t$ be $T^t_k=\{v_i:|\{v_j:v^t_j>v^t_i\}|<k\}$.
Then ALG {\em converges on $h$ of the top $k$ ranks in $\tau$
steps} if $|\bigcap_{t=\tau}^\infty T^t_k|\geq h$.
\end{definition}

In other words, an algorithm converges on $h$ of the top $k$ ranks in
$\tau$ steps if after $\tau$ steps it already ``gets right'' at
least $h$ of the $k$ (eventually) top ranked elements. This
definition is closely related to that of convergence in the {\em
intersection metric}~\cite{FKS03,Peserico&2007} for the top $k$
positions; \cite{Peserico&2007} provides a more thorough
discussion of its relationship to other popular metrics such as
Kendall's $\tau$, Cramer-von Mises' $W^2$, or Kolmogorov-Smirnov's
$D$.

We prove the first non-trivial lower bound on the iterations HITS
requires to converge in rank. All previous rank convergence
studies save \cite{Peserico&2007} are experimental, and none
investigates HITS (focusing instead on
PageRank~\cite{H99,KHMG03,Peserico&2007}).

\section{\large HITS Can Converge Slowly in Rank}
\label{sect:theorem}
Informally, we prove that HITS on an $N$ node graph can take
$h^{\Omega(N\frac{h}{k})}$ steps to ``get right'' $h$ of the top
$k$ elements. This effectively means $\Omega(\frac{N h\log h}{k})$
matrix multiplications using the standard ``squaring trick'' that computes
the $p^{th}$ power of a matrix $M$ by first computing the matrices
$M^2,M^4,\ldots,M^{2^{\lfloor\log p\rfloor}}$. More formally, we
devote the rest of this section to the proof of:

\begin{theorem}
\label{thm:general} For all $h$ and $k$ such that $k>h>5$, and all
odd $n\geq\max(3,\frac{k-h+2}{2})$, there is an undirected graph
$\Gamma_{h,k,n}$ of
$N=\lceil\frac{k-2}{h-3}\rceil(2n+h-3)+1\approx\frac{2nk}{h}+k$
vertices on which HITS requires more than $\bar
t=\frac{3\ln(7/6)}{4e}(\frac{h-3}{2})^{\frac{n-1}{2}}=
h^{\Omega(n)}$ steps to converge on $h$ of the top $k$ ranks (and
the last term is $h^{\Omega(N\frac{h}{k})}$ for $N\geq 2k$).
\end{theorem}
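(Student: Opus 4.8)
The plan is to eliminate all linear algebra by exploiting the pebble interpretation of Section~\ref{sec:hits} together with the fact that $\Gamma_{h,k,n}$ is \emph{undirected}. If $A$ denotes the (symmetric) adjacency matrix, then the update rules~(\ref{equa:for1}) give $a^{(k)}=A\,h^{(k-1)}$ and $h^{(k)}=A\,a^{(k)}$ with $h^{(0)}=\mathbf 1$, so the score of node $v_i$ at ``time'' $t$ (authority for odd $t=2k-1$, hub for even $t=2k$) is exactly $(A^t\mathbf 1)_i$, the number of walks of length $t$ starting at $v_i$. Since ranking is invariant under the $\|\cdot\|_2$ normalization, I would track only the integer walk-counts $w_i(t):=(A^t\mathbf 1)_i$ and reduce Definition~\ref{def:convergence} to a purely combinatorial statement: HITS fails to converge on $h$ of the top $k$ ranks before step $\bar t$ provided that at a single time $t$ just below $\bar t$ at least $k-h+1$ of the eventual top-$k$ nodes are \emph{simultaneously} out-ranked, in walk-count, by nodes that are not eventually in the top $k$ (this kills the infinite intersection $\bigcap_{t'\ge\tau}T^{t'}_k$ of Definition~\ref{def:convergence} for every $\tau\le t$).

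Next I would design a gadget on $2n+h-3$ vertices containing a pair of ``competitor'' nodes whose walk-count curves $w_a(t)$ and $w_b(t)$ cross only at an exponentially late time: node $b$ is the eventual winner (larger asymptotic growth, hence eventually in the top $k$), while node $a$ leads for every $t<\bar t$, so $b$ is wrongly excluded. Laying down $\lceil\frac{k-2}{h-3}\rceil$ disjoint copies of this gadget, glued through a single shared vertex into a connected graph of the stated size $N=\lceil\frac{k-2}{h-3}\rceil(2n+h-3)+1$, makes each copy contribute about $h-3$ eventual-top nodes and a like number of displaced ones; choosing the copy count so that the total number of simultaneously displaced eventual-top nodes is at least $k-h+1$ then yields non-convergence before $\bar t$ by the combinatorial criterion above.

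The crux is proving the crossover time is $\bar t=h^{\Omega(n)}$. Here I would argue entirely by induction on $n$ over the walk-count recurrences of the gadget, which (being path/tree-like) are linear and can be manipulated by elementary direct-counting or generating-function arguments rather than spectral ones. The goal is a self-improving inequality of the form ``the crossover time for parameter $n$ is at least $\frac{h-3}{2}$ times the crossover time for parameter $n-2$,'' which unwinds to $\bar t\geq c\,(\frac{h-3}{2})^{(n-1)/2}$ and matches the stated constant $c=\frac{3\ln(7/6)}{4e}$ after tracking the base case and the normalizing $\ln(7/6)$-type slack. I expect this inductive walk-count comparison---showing that the sign of $w_b(t)-w_a(t)$ stays negative until $t$ is this large, with the branching factor $h-3$ entering multiplicatively at each of the $\sim n/2$ levels---to be the main obstacle, since it requires engineering the gadget so that the dominant ``catch-up'' term in the walk-count difference is delayed by a further factor $\frac{h-3}{2}$ at each level.

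Finally, the bookkeeping is routine: verify the vertex count $N$ and the estimate $N\approx\frac{2nk}{h}+k$, invert it to $n=\Omega(\frac{Nh}{k})$ for $N\geq 2k$, and substitute to turn $h^{\Omega(n)}$ into $h^{\Omega(N\frac{h}{k})}$; the ``squaring-trick'' corollary that $\bar t=h^{\Omega(n)}$ iterations force $\Omega(N\frac{h\log h}{k})$ matrix multiplications then follows by taking logarithms.
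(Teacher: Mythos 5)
Your framework (walk counts via the pebble interpretation, decoys that temporarily outrank the eventual top $k$, an exponentially late crossover) has the right shape, but two essential pieces are missing or wrong. The construction you describe --- $\lceil\frac{k-2}{h-3}\rceil$ \emph{identical} gadgets glued at a shared vertex, each containing both eventual winners and temporary leaders --- is not the paper's and, as stated, does not clear Definition~\ref{def:convergence}: a node is evicted from the weak top-$k$ set $T^t_k$ only when at least $k$ nodes \emph{strictly} outrank it, whereas in a symmetric design the eventual winners in different copies are tied with one another, so each is strictly outranked by only the roughly $k-2$ decoys and never leaves $T^t_k$; nothing gets displaced. The paper breaks the symmetry instead of exploiting it: $\Gamma_{h,k,n}$ is a \emph{disjoint} union of one distinguished component $\bar\Gamma_{m,n}$ (a $(2n{+}1)$-vertex chain whose two endpoints each attach to the same $m=h-3$ rim vertices) and $\ell$ components identical to it except that the middle chain vertex $v_0$ is deleted. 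That one extra vertex gives $\bar\Gamma_{m,n}$ a growth advantage that is exponentially small in $n$; this simultaneously forces the eventual top $k$ into $\bar\Gamma_{m,n}$ (Lemma~\ref{lem:fastgrow}) and makes the takeover slow (Lemma~\ref{lem:slowgrow}), and the displaced nodes are the $2n-1$ interior chain vertices of $\bar\Gamma_{m,n}$, each beaten by the $\ell m+m+2\geq k$ rim and endpoint vertices, so that at most $h-1$ vertices of $\bar\Gamma_{m,n}$ survive in $T^t_k$.

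Second, the crux --- that the crossover happens only after $h^{\Omega(n)}$ steps --- is announced but not proved: you say you would find a gadget admitting a self-improving induction on $n$ with gain $\frac{h-3}{2}$ per two levels, but exhibit neither the gadget nor the induction, and you yourself flag this as the main obstacle. The paper's mechanism is not an induction on $n$: it observes that by time $n-1$ the interior of the chain has grown by a factor $2$ per step while the rim has grown by $\sqrt{2m}$ per step, so $v_0^{n-1}/u_{n+1}^{n-1}\leq(2/m)^{(n-1)/2}$; a monotonicity lemma on growth rates along the chain (Lemma~\ref{lem:localgrowth}) then yields the per-step bound $\frac{v^{t+1}_{n+1}/v^{t}_{n+1}}{u^{t+1}_{n+1}/u^{t}_{n+1}}\leq 1+\frac{m+1}{m}\frac{v^t_0}{u^t_{n+1}}$, so the ratio $v^t_{n+1}/u^t_{n+1}$ needs on the order of $(m/2)^{(n-1)/2}$ steps to move by a constant factor. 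Without a concrete gadget and a proof of this quantitative estimate, the proposal does not establish the theorem.
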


$\Gamma_{h,k,n}$ (Fig.~\ref{fig:gamma}) is formed by a subgraph
$\bar\Gamma_{m,n}$ (with $m=h-3$) and
$\ell=\lceil\frac{k-h+1}{h-3}\rceil$ isomorphic subgraphs
$\Gamma^1_{m,n},\dots,\Gamma^\ell_{m,n}$. $\bar\Gamma_{m,n}$ has
$2n+m+1$ vertices, $v_{-n},\ldots,v_0,\ldots,v_{n+m}$. The $2n+1$
vertices $v_{-n},\ldots,v_0,\ldots,v_{n}$ form a chain, with $v_i$
connected to $v_{i-1}$. The first and last vertices of the chain,
$v_{-n}$ and $v_n$, are also connected to each of the $m$ vertices
$v_{n+1},\ldots,v_{n+m}$. $\Gamma^1_{m,n}$ has $2n+m$ vertices,
$u_{-n},\ldots,u_{-1},u_1,\ldots,u_{n+m}$, and is almost
isomorphic to $\bar\Gamma_{m,n}$: $u_i$ is connected to $u_j$ if
and only if $v_i$ is connected to $v_j$. The only difference is
that $u_0$ is missing.
\vspace*{.01\textheight}
\begin{figure}[h]
\begin{center}
\resizebox{.95\textwidth}{!}{\includegraphics{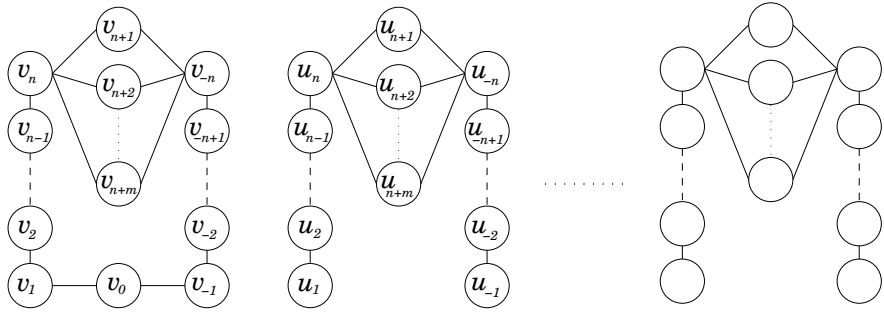}}
\caption{\small The graph $\Gamma_{h,k,n}$ is formed by the
subgraph $\bar\Gamma_{m,n}$ (first left) and $\ell$ subgraphs
isomorphic to the subgraph $\Gamma^1_{m,n}$ (second left), with
$m\approx h$ and $\ell\approx\frac{k}{h}$.} \label{fig:gamma}
\end{center}
\end{figure}

\noindent The proof of the theorem proceeds as follows. After
introducing some notation, Lemma~\ref{lem:localgrowth} bounds the
growth rate of the number of pebbles on $v_{i}$ and ${u_i}$ as a
function of $i$. Lemma~\ref{lem:localgrowth} allows us prove, in
Lemma~\ref{lem:fastgrow} that eventually only a vanishing fraction
of all pebbles resides outside $\bar\Gamma_{m,n}$, and in
Lemma~\ref{lem:slowgrow} that $v_{n+1}$ acquires pebbles only
minimally faster than $u_{n+1}$. We then prove the theorem showing
that $\bar\Gamma_{m,n}$ eventually holds all the top $k$ nodes,
but $u_{n+1},\dots,u_{n+m}$ and the corresponding nodes in
$\Gamma^2_{m,n},\dots,\Gamma^\ell_{m,n}$ for $t\leq \bar t$ still
outrank $v_{n-1},\dots,v_0,\dots,v_{-n+1}$ (and thus at least
$\ell m>k-h$ of the top $k$ nodes lie outside $\bar\Gamma_{m,n}$).

Denote by $v^t$ the number of descendants at time $t$ of a pebble
present at time $0$ on $v$ - which is also equal to the number of
pebbles present on $v$ after a total of $t$ timesteps, since both
quantities are described by the recursive equation
$v^{t+1}=\sum_{u\leftrightarrow v}u^t$ with $v^0=1$.

Also, mark with a timestamp $\tau$ any pebble present at time
$\tau$ on $v_0$ and any pebble not on $v_0$ whose most recent
ancestor on $v_0$ was present at time $\tau$. Note that the number
of unmarked pebbles present at any given time on $v_i$ (for any
$i$) is equal to the total number of pebbles present at that time
on $u_i$; and, more generally, it is straightforward to verify by
induction on $t-\tau$ that any pebble present on a vertex $u_i$ at
time $\tau$ has, on any vertex $u_j$ and at any time $t\geq \tau$,
a number of descendants equal to the number of descendants not
marked after $\tau$ that any pebble present on a vertex $v_i$ at
time $\tau$ has on $v_j$ at time $t$.

\begin{lemma}
\label{lem:localgrowth} For any $t\geq 0$, and any $i,j$ such that
$(i \equiv j)$ mod $2$ and $0 \leq i < j \leq n+1$, we have $1
\leq \frac{v_i^{t+1}}{v_i^t} \leq \frac{v_j^{t+1}}{v_j^t} \leq
m+1$ and similarly (if $i>0$) $1 \leq \frac{u_i^{t+1}}{u_i^t} \leq
\frac{u_j^{t+1}}{u_j^t} \leq m+1$.
\end{lemma}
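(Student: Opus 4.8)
The plan is to prove the two chains of inequalities—the ratio bound $1 \leq v_i^{t+1}/v_i^t$ and the monotonicity $v_i^{t+1}/v_i^t \leq v_j^{t+1}/v_j^t$ together with the cap $\leq m+1$—by a joint induction. The natural object to track is not $v_i^t$ itself but the pebble recurrence $v^{t+1}=\sum_{u\leftrightarrow v}u^t$ established just above the lemma. Since we are restricting to indices of the same parity, and the chain alternates, I would first observe that in one timestep a pebble count at $v_i$ is fed only by its graph neighbors, so $v_i^{t+1}$ is a sum of $v^t$-values at adjacent vertices; the same-parity condition is exactly what makes $v_i^{t+2}$ a clean function of the $v^t$-values at vertices of the same parity as $v_i$. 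So I expect the induction to be on $t$, proving simultaneously for all valid $i<j$ that the growth ratios are ordered and bounded.

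\medskip

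\noindent For the \textbf{lower bound} $v_i^{t+1}/v_i^t \geq 1$, the cleanest route is to show each $v_i^t$ is nondecreasing in $t$ along the same-parity subsequence, which follows because every vertex has at least one neighbor and pebbles are never destroyed faster than replaced on this structure; more carefully, I would argue $v_i^{t+1}\geq v_i^t$ by induction, using that the neighbor sum feeding $v_i$ includes a term that dominates the previous count. For the \textbf{monotonicity in $i$}, the key structural fact is that as $i$ increases toward the ends of the chain ($v_{-n}$ and $v_n$), a vertex sees more of the high-degree ``fan'' vertices $v_{n+1},\dots,v_{n+m}$, so its pebble count grows faster; I would make this precise by comparing the recurrences for adjacent same-parity indices and showing the ratio difference has a consistent sign, propagated by the inductive hypothesis. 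The \textbf{upper cap} $\leq m+1$ comes from the maximum degree: a vertex connected to the two chain-endpoints plus at most $m$ fan vertices can at most multiply its feeding mass by roughly $m+1$ per double-step, and this is where the $m+1$ literally enters.

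\medskip

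\noindent I would handle the $u_i$ statement by invoking the ``almost isomorphic'' correspondence set up in the paragraph before the lemma: every identity for $v_i$ transfers to $u_i$ verbatim except for contributions routed through the missing vertex $u_0$, which is why the $u$-claim is stated only for $i>0$. Since the marking/timestamp argument already shows that unmarked pebbles on $v_i$ are in bijection with all pebbles on $u_i$, the $u$-inequalities should follow as a corollary of the $v$-inequalities restricted to unmarked pebbles, rather than requiring a fresh induction.

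\medskip

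\noindent The main obstacle I anticipate is the monotonicity step $v_i^{t+1}/v_i^t \leq v_j^{t+1}/v_j^t$: unlike the lower bound and the degree cap, which are essentially local, this is a global comparison across the chain, and ratios of sums do not behave monotonically under naive term-by-term comparison. The delicate point is that the inductive hypothesis must be strong enough—likely I would need to carry not just the ordering of ratios but some auxiliary monotonicity of the $v_i^t$ themselves in $i$ (e.g.\ that same-parity counts increase toward the high-degree ends)—so that when I expand $v_i^{t+1}$ and $v_j^{t+1}$ as neighbor sums, the ordering is preserved. Getting the induction hypotheses mutually self-supporting, so that the parity-restricted ratio monotonicity feeds back correctly at each step without the endpoint vertices $v_n,v_{n+1}$ breaking the pattern, is where the real work will lie.
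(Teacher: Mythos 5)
Your framing — induction on $t$, growth ratios, the cap $m+1$ entering through the maximum degree, and a separate (essentially identical) treatment of the $u_i$ — matches the paper's, but the one idea that actually makes the proof close is missing, and it is precisely the step you defer as ``where the real work will lie.'' The paper's observation is that
$\frac{v_h^{t+1}}{v_h^{t}}=\frac{\sum_{v_{h'}\leftrightarrow v_h}v_{h'}^{t}}{\sum_{v_{h'}\leftrightarrow v_h}v_{h'}^{t-1}}$
is a \emph{weighted average} (with positive weights $v_{h'}^{t-1}$) of the neighbor ratios $\frac{v_{h'}^{t}}{v_{h'}^{t-1}}$, hence lies between their minimum and maximum. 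This single mediant-type fact disposes of your stated worry that ``ratios of sums do not behave monotonically under naive term-by-term comparison'': you do not compare the two sums term by term at all. The neighbors of $v_i$ have the opposite parity to $i$ and are themselves a same-parity pair, so by the inductive hypothesis the maximum neighbor ratio at $v_i$ is attained at $v_{i+1}$ and the minimum at $v_j$ is attained at $v_{j-1}$; applying the hypothesis once more to the same-parity pair $(i+1,j-1)$ chains everything together as $\frac{v_i^{t+1}}{v_i^t}\leq\frac{v_{i+1}^{t}}{v_{i+1}^{t-1}}\leq\frac{v_{j-1}^{t}}{v_{j-1}^{t-1}}\leq\frac{v_j^{t+1}}{v_j^t}$, with the bounds $1$ and $m+1$ propagating by the same averaging. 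No auxiliary hypothesis about monotonicity of the counts $v_i^t$ in $i$ is needed, and carrying one as you propose would not obviously help: the quantity being ordered is a ratio, and the averaging identity is what converts the neighbor structure into an ordering of ratios.

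Two smaller points. First, your lower bound is fine once stated as term-by-term domination of the whole neighbor sum ($v_i^{t+1}=\sum_{u\leftrightarrow v_i}u^t\geq\sum_{u\leftrightarrow v_i}u^{t-1}=v_i^t$ by the inductive hypothesis at each neighbor); the ``same-parity subsequence'' detour is unnecessary, since the lemma asserts single-step growth. Second, deriving the $u_i$ statement ``as a corollary of the $v$-inequalities restricted to unmarked pebbles'' is a non sequitur as written: the $v$-inequalities concern total counts, and an inequality for a sum does not restrict to a summand. What is true is that the unmarked counts satisfy the same recurrence as the pebble counts on $\Gamma^1_{m,n}$ (the graph with $v_0$ deleted), so the honest route — and the paper's — is simply to rerun the identical induction there; the restriction to $i>0$ is exactly what keeps the missing vertex from disturbing the neighbor structure used in the argument.
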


\begin{proof}
We prove that $1 \leq \frac{v_i^{t+1}}{v_i^t} \leq
\frac{v_j^{t+1}}{v_j^t} \leq m+1$ by induction on $t$. The base
case $t=0$ is easily verified. $\frac{v_h^{t+1}}{v_h^t}=
\frac{\sum_{v_{h'}\leftrightarrow v_h}v_{h'}^t}
{\sum_{v_{h'}\leftrightarrow v_h}v_{h'}^{t-1}}$ is a weighted
average (with positive weights) of all ratios
$\frac{v_{h'}^t}{v_{h'}^{t-1}}$. By inductive hypothesis, $1 \leq
\min_{v_{i'}\leftrightarrow v_i} \frac{v_{i'}^t}{v_{i'}^{t-1}}
\leq \frac{v_i^{t+1}}{v_i^t} \leq \max_{v_{i'}\leftrightarrow v_i}
\frac{v_{i'}^t}{v_{i'}^{t-1}} = \frac{v_{i+1}^t}{v_{i+1}^{t-1}}
\leq \frac{v_{j-1}^t}{v_{j-1}^{t-1}} = \min_{v_{j'}\leftrightarrow
v_j} \frac{v_{j'}^t}{v_{j'}^{t-1}} \leq \frac{v_j^{t+1}}{v_j^t}
\leq \max_{v_{j'}\leftrightarrow v_j}
\frac{v_{j'}^t}{v_{j'}^{t-1}} \leq m+1$. The proof that $1 \leq
\frac{u_i^{t+1}}{u_i^t} \leq \frac{u_j^{t+1}}{u_j^t} \leq m+1$ is
identical.
\end{proof}

\begin{lemma}
\label{lem:fastgrow}
$\forall i>0$ $\underset{t\rightarrow\infty}{\lim}(u^t_i/v^t_i)=0$.
\end{lemma}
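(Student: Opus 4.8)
The plan is to exploit the marking correspondence already set up. The score $u_i^t$ counts the pebbles on $v_i$ whose entire trajectory has avoided $v_0$ (equivalently, the pebbles living in the vertex-deleted graph $\Gamma^1_{m,n}$), while $m_i^t:=v_i^t-u_i^t$ counts the \emph{marked} pebbles, those that have already passed through $v_0$. Thus the assertion $u_i^t/v_i^t\to 0$ is equivalent to $m_i^t/u_i^t\to\infty$: marking must eventually swamp the unmarked population. Before any work I would reduce the statement. The reflection $v_i\mapsto v_{-i}$ is a graph automorphism fixing $v_0$ and the $m$ extra vertices, so $v_i^t=v_{-i}^t$ and $u_i^t=u_{-i}^t$, and it suffices to treat $i>0$. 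I would work with two-step ratios throughout, since $\bar\Gamma_{m,n}$ is bipartite.

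The engine is a renewal identity for the marked scores. Since every pebble leaving $v_0$ is born marked, the marked scores evolve under the $\Gamma^1_{m,n}$-dynamics driven by a source injecting $v_0^t$ fresh pebbles at each of $v_1$ and $v_{-1}$ at every step; concretely $m_1^{t+1}=v_0^t+m_2^t$. Unfolding this recursion yields the exact identity $m_i^t=\sum_{s} v_0^{\,s-1}\,w^{(i)}_{t-s}$, where $w^{(i)}_r$ counts the $v_0$-avoiding walks from $v_{\pm1}$ to $v_i$ of length $r$ — a purely $\Gamma^1_{m,n}$ quantity of the same kind as the $u$-scores. I would then pin down growth rates using Lemma~\ref{lem:localgrowth}: a Collatz--Wielandt-style monotone squeeze (maximal ratio nonincreasing, minimal ratio nondecreasing, both trapped in $[1,m+1]$ by Lemma~\ref{lem:localgrowth}, and needing no eigenvalue machinery) shows that the two-step ratios of $\bar\Gamma_{m,n}$ and of $\Gamma^1_{m,n}$ converge to common factors $\lambda^2$ and $\mu^2$, with $\mu\le\lambda$ since deleting $v_0$ only removes walks. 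Feeding these rates into the identity, the central source $v_0^{\,s}$ grows at the full rate $\lambda$ while the propagators $w^{(i)}_{t-s}$ and the scores $u_i^t$ grow only at the subgraph rate $\mu$, so the convolution outgrows $u_i^t$ and $m_i^t/u_i^t\to\infty$. Because the source is always the same central $v_0^{\,s}$ and only the propagator's endpoint moves, the identical estimate runs verbatim at every vertex, delivering the conclusion for all $i>0$ at once.

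I expect the growth comparison to be the main obstacle. Morally it is Perron--Frobenius: $\Gamma^1_{m,n}$ is a proper connected subgraph of $\bar\Gamma_{m,n}$, so $\mu<\lambda$ strictly; but the whole point is to stay elementary, and the gap $\lambda-\mu$ is in fact exponentially small in $n$ (this is exactly what makes HITS converge slowly and is quantified later in Lemma~\ref{lem:slowgrow}). The genuine work is therefore to upgrade the bounded, monotone ratios of Lemma~\ref{lem:localgrowth} into rate estimates sharp enough to witness the strict separation — simultaneously upper-bounding $u_i^t$ and lower-bounding the $v_0^{\,s}$ and $w^{(i)}_r$ factors at matching rates, with care taken over the bipartite parity — rather than settling for the crude bounds that only give a $(1-\delta)$ fraction marked. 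What rescues the argument from needing the (tiny) value of $\lambda-\mu$ is the nonnegativity of every term in the renewal identity, which lets the accumulated contributions of the central source dominate the starved unmarked population.
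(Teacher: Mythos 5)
Your bookkeeping is sound --- the identification of $u_i^t$ with the unmarked pebbles, the renewal identity $m_i^t=\sum_s v_0^{s-1}w^{(i)}_{t-s}$, and the recursion $m_1^{t+1}=v_0^t+m_2^t$ are all correct --- but the engine you propose to drive the conclusion is missing, and you say so yourself. Lemma~\ref{lem:localgrowth} gives you monotone extremal ratios trapped in $[1,m+1]$, hence that the maximal ratio converges down to some limit and the minimal ratio converges up to some limit; it does \emph{not} give you that these limits coincide (that needs a separate contraction or primitivity argument), and, far more seriously, it gives no purchase on the strict inequality $\mu<\lambda$. That strict separation between the growth rate of $\bar\Gamma_{m,n}$ and of its $v_0$-deleted subgraph is essentially equivalent to the statement $u_i^t/v_i^t\to 0$ you are trying to prove, so routing the proof through it is circular unless you supply an independent elementary argument for the gap --- which you explicitly defer as ``the genuine work.'' Your fallback, that nonnegativity of the convolution lets the accumulated source dominate even without a quantified $\lambda-\mu$, is the right instinct, but as written it is one sentence standing where the actual proof must live: to make $\sum_s v_0^{s-1}w^{(i)}_{t-s}$ visibly outgrow $u_i^t$ you still need a quantitative comparison showing that each cohort seeded at $v_{\pm 1}$ with $v_0^s$ pebbles spreads, within $O(n)$ steps, to a configuration dominating a fixed $(m+1)^{-O(n)}$ fraction of the unmarked population; no such estimate appears in your outline.

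That quantitative comparison is precisely what the paper's much shorter proof supplies, working pebble by pebble rather than through spectral quantities: any unmarked pebble on $v_i$ has, within $\tau\le 2n+2$ steps, at least one \emph{marked} descendant sitting on $v_{n+1}$; it has at most $(m+1)^\tau$ descendants in total at that moment; and by Lemma~\ref{lem:localgrowth} the descendant on $v_{n+1}$ thereafter always carries at least as many descendants as any of its siblings. Hence every $2n+2$ steps the unmarked fraction of any pebble's progeny shrinks by a factor $1-(m+1)^{-(2n+2)}$, which yields the lemma directly. If you want to salvage your renewal-identity route, this is the estimate you would need to import to make the accumulation step rigorous; without it, the proposal is a plausible strategy but not yet a proof.
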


\begin{proof}
For all $i\geq 0$ consider an unmarked pebble $p_i$ present at
time $t$ on vertex $v_i$. At some time $t+\tau$, with $\tau \leq
2n+2$, $v_{n+1}$ holds at least one marked descendant $p'_i$ of
$p_i$; by virtue of Lemma \ref{lem:localgrowth}, $p'_i$ thereafter
always has at least as many descendants as any of the other at
most $(m+1)^\tau$ descendants of $p_i$ present at time $t+\tau$.
Then, every $2n+2$ timesteps, the fraction of unmarked descendants
of an unmarked pebble drops by a factor at least
$1-(m+1)^{-(2n+2)}$.
\end{proof}

\begin{lemma}
\label{lem:slowgrow} $\frac{v^{t+1}_{n+1}/v^{t}_{n+1}}
{u^{t+1}_{n+1}/u^{t}_{n+1}} \leq
1+\frac{m+1}{m}\frac{v^{t}_{0}}{u^{t}_{n+1}}$.
\end{lemma}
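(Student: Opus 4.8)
The plan is to isolate the ``extra'' pebbles that sit on $v_{n+1}$ but are absent from $u_{n+1}$, and to show that per step they can inflate the growth rate of $v_{n+1}$ over that of $u_{n+1}$ only by the claimed amount. Following the marking set up just before the lemma, I would write $v^t_i=u^t_i+w^t_i$, where $u^t_i$ is the number of \emph{unmarked} pebbles on $v_i$ (equal, as already noted, to the pebble count on $u_i$) and $w^t_i$ is the number of \emph{marked} pebbles on $v_i$, i.e.\ those descended from a sojourn at $v_0$. First I would record that $v_{n+1}$ is adjacent only to the two chain endpoints $v_{-n},v_n$, and that the reflection $i\mapsto -i$ is an automorphism of $\bar\Gamma_{m,n}$ preserved both by the uniform start $v^0\equiv 1$ and by the reflection-symmetric (centred at $v_0$) definition of marking; hence $v^t_{-n}=v^t_n$, $u^t_{-n}=u^t_n$, $w^t_{-n}=w^t_n$, and therefore
\[ v^{t+1}_{n+1}=2v^t_n,\qquad u^{t+1}_{n+1}=2u^t_n,\qquad w^{t+1}_{n+1}=2w^t_n. \]

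Writing $r^t_i=w^t_i/u^t_i$ for the marked-to-unmarked ratio, these identities collapse the target quantity into a purely spatial comparison at the single time $t$:
\[ \frac{v^{t+1}_{n+1}/v^t_{n+1}}{u^{t+1}_{n+1}/u^t_{n+1}} =\frac{v^t_n\,u^t_{n+1}}{u^t_n\,v^t_{n+1}} =\frac{1+r^t_n}{1+r^t_{n+1}}. \]
Since $1+r^t_{n+1}\ge 1$, the right-hand side is at most $1+(r^t_n-r^t_{n+1})$ whenever $r^t_n\ge r^t_{n+1}$, and is at most $1$ otherwise (making the claim trivial). So it suffices to prove the single inequality $r^t_n-r^t_{n+1}\le\frac{m+1}{m}\,v^t_0/u^t_{n+1}$, which, clearing denominators, is
\[ w^t_n\,u^t_{n+1}-w^t_{n+1}\,u^t_n\ \le\ \tfrac{m+1}{m}\,v^t_0\,u^t_n. \]
Intuitively this says the marked pebbles --- all born at the centre $v_0$ --- are relatively more concentrated on the chain endpoint $v_n$ than on the outer vertex $v_{n+1}$ one step further out, by a margin governed by the source strength $v^t_0$.

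The hard part will be this last inequality, because its left-hand cross term mixes two vertices and cannot be bounded vertex by vertex. The approach I would take is to set $C^t(i,j)=w^t_i u^t_j-w^t_j u^t_i$ for adjacent $i,j$ and exploit that, \emph{away} from $v_0$, both $w$ and $u$ obey the same homogeneous pebble recursion: substituting $w^t_n=w^{t-1}_{n-1}+m\,w^{t-1}_{n+1}$, $u^t_n=u^{t-1}_{n-1}+m\,u^{t-1}_{n+1}$ (the factor $m$ coming from the $m$ symmetric outer neighbours of $v_n$) together with the doubling identities above yields $C^t(n,n+1)=2\,C^{t-1}(n-1,n)-2m\,C^{t-1}(n,n+1)$, which pushes the cross term one edge closer to the centre. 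Iterating this down the chain should reduce everything to the behaviour at $v_0$, where the inhomogeneity lives ($u^t_0=0$ while $w^t_0=v^t_0$, so $v_0$ is a pure source of marked pebbles), and the prefactor $\frac{m+1}{m}$ should emerge from the degree-$(m+1)$ branching at the endpoint $v_n$ (its one chain neighbour plus the $m$ outer vertices).

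The main obstacle I anticipate is controlling this telescoping cleanly: in particular establishing the spatial monotonicity $r^t_i\ge r^t_{i+1}$ (so that the discarded $-2m\,C^{t-1}(n,n+1)$ term carries the helpful sign, and so that the reduction in the previous paragraph is in force), and then keeping the contributions accumulated from all earlier times bounded by the single source term $\frac{m+1}{m}v^t_0 u^t_n$ rather than letting them pile up. Here I expect Lemma~\ref{lem:localgrowth} --- which both orders the per-step growth ratios and caps them at $m+1$ --- to be precisely the tool that tames this time-accumulation and closes the estimate.
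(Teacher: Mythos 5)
Your reduction is correct as far as it goes: the observation that $v_{n+1}$ is adjacent only to $v_{\pm n}$, the reflection symmetry giving $v^{t+1}_{n+1}=2v^t_n$ (and likewise for $u$ and $w$), and the resulting identity $\frac{v^{t+1}_{n+1}/v^t_{n+1}}{u^{t+1}_{n+1}/u^t_{n+1}}=\frac{1+r^t_n}{1+r^t_{n+1}}$ are all sound, as is the algebra reducing the lemma to $w^t_n u^t_{n+1}-w^t_{n+1}u^t_n\le\frac{m+1}{m}v^t_0 u^t_n$. But that last inequality \emph{is} the lemma --- everything before it is bookkeeping --- and you do not prove it. The cross-term recursion $C^t(n,n+1)=2C^{t-1}(n-1,n)-2mC^{t-1}(n,n+1)$ is correctly derived, but to use it you need (i) the spatial monotonicity $r^t_i\ge r^t_{i+1}$ to discard the $-2mC^{t-1}(n,n+1)$ term, which is itself an unproven claim of the same difficulty as the target; (ii) a full telescoping of $C^{t-1}(n-1,n)$ down the chain to the source at $v_0$, which involves different adjacency structure at each interior vertex and is not written; and (iii) a bound on the accumulation over all earlier times by the single term $\frac{m+1}{m}v^t_0u^t_n$, for which you offer only the hope that Lemma~\ref{lem:localgrowth} ``tames'' it. You explicitly flag these as anticipated obstacles rather than resolved steps, so the proof is not complete.

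For contrast, the paper avoids the spatial cross-term analysis entirely: it partitions the descendants of a pebble initially on $v_{n+1}$ into the unmarked class $D^t_u$ and timestamp classes $D^t_\tau$, uses Lemma~\ref{lem:localgrowth} to show $D^t_u$ has the largest per-step growth rate (so the whole population grows no faster than $D^t_u$ alone, except for the newly created class $D^{t+1}_{t+1}$), bounds that new class by $v^{t+1}_0/m$, and then converts $\frac{1}{m}\frac{v^{t+1}_0}{v^t_{n+1}}$ into $\frac{m+1}{m}\frac{v^t_0}{u^t_{n+1}}$ using the growth-rate cap $m+1$ and $v^t_{n+1}\ge u^t_{n+1}$. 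If you want to salvage your route, the most promising fix is to import exactly this idea at the point where you got stuck: decompose $w^t_{n+1}$ and $w^t_n$ by timestamp and compare growth rates classwise via Lemma~\ref{lem:localgrowth}, rather than trying to control the discrete Wronskian $C^t(i,j)$ along the chain.
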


\begin{proof}
Denote by $D^t_\tau$ the number of descendants at time $t$ of a
pebble initially on $v_{n+1}$ whose timestamp is $\tau$, and with
$D^t_u$ the number of those descendants yet unmarked. Since all
pebbles whose timestamp is $\tau$ descend from pebbles present in
$v_0$ at time $\tau$, Lemma~\ref{lem:localgrowth} guarantees that
the growth rates of $D^t_{(\cdot)}$ satisfy
$\frac{D^{t+1}_u}{D^t_u} \geq \frac{D^{t+1}_\tau}{D^{t}_\tau}$ for
any $\tau$ for which $D^{(\cdot)}_\tau \neq 0$. Thus,
$\frac{v^{t+1}_{n+1}}{v^{t}_{n+1}}=
\frac{D^{t+1}_u+D^{t+1}_0+\dots+D^{t+1}_t}{D^{t}_u+D^{t}_0+\dots+D^{t}_t}+
\frac{D^{t+1}_{t+1}}{D^{t}_u+D^{t}_0+\dots+D^{t}_t}\leq
\frac{D^{t+1}_u}{D^t_u}+\frac{v^{t+1}_{0}/m}{v^{t}_{n+1}}=
\frac{u^{t+1}_{n+1}}{u^t_{n+1}} +
\frac{1}{m}\frac{v^{t+1}_{0}}{v^{t}_{n+1}}$; and therefore
$\frac{v^{t+1}_{n+1}/v^{t}_{n+1}} {u^{t+1}_{n+1}/u^{t}_{n+1}}\leq
1+ \frac{1}{m}\frac{v^{t+1}_{0}}{v^{t}_{n+1}}
\cdot\frac{u^{t}_{n+1}}{u^{t+1}_{n+1}}\leq
1+\frac{1}{m}\frac{v^{t+1}_{0}}{u^{t+1}_{n+1}} \leq
1+\frac{m+1}{m}\frac{v^{t}_{0}}{u^{t}_{n+1}}$.
\end{proof}

\vspace{0.04\textheight}

 \noindent We can now prove Theorem \ref{thm:general}. By
Lemma \ref{lem:fastgrow}, $\underset{t\rightarrow\infty}{\lim}
(u^t/\sum_{w\in\Gamma_{h,k,n}}w^t) =0~\forall u\notin
\bar\Gamma_{m,n}$; whereas $\forall t,~0\leq i\leq n+1$, by Lemma
\ref{lem:localgrowth}
$v_i^t\geq\frac{1}{(m+2n+1)(m+1)^{n+1}}\sum_{v\in\bar\Gamma_{m,n}}v^t$.
Thus, eventually the top $(m+2n+1)\geq k$ ranked nodes all belong
to $\bar\Gamma_{m,n}$.

We complete the proof showing that, for
$n-1\leq t\leq \bar t=\frac{3\ln(7/6)}{4e}(\frac{m}{2})^{\frac{n-1}{2}}
= m^{\Omega(n)}$, we have
$\frac{\max(v_{n-1}^t,~v_{n-2}^t)}{u_{n+1}^t}\leq\frac{7}{8}$
and thus by Lemma~\ref{lem:localgrowth}
at least $\ell m$ elements outside $\bar\Gamma_{m,n}$ are
among the top $k$ ranked nodes.
Note that
$\frac{v_{n-1}^3}{v_{n+1}^3}=\frac{2m+6}{4m+4}$; that
$\frac{v_{n-2}^3}{v_{n}^3}=\frac{m+7}{2m^2+3m+3}$; and that
$\frac{v_n^t}{v_{n+1}^t}\leq\frac{(m+1)v_n^{t-1}}{2v_n^{t-1}}=\frac{m+1}{2}$.
Then, by Lemma \ref{lem:localgrowth},
$\frac{\max(v_{n-1}^t,~v_{n-2}^t)}{u_{n+1}^t}\leq
\frac{v_{n+1}^t}{u_{n+1}^t}\cdot\max(\frac{v_{n-1}^t}{v_{n+1}^t},
\frac{v_{n}^t}{v_{n+1}^t}\cdot\frac{v_{n-2}^t}{v_n^t})\leq
\frac{v_{n+1}^t}{u_{n+1}^t}\cdot\frac{6}{8}$ for $m\geq 3$. All is left
to prove is that $\frac{v_{n+1}^t}{u_{n+1}^t}\leq\frac{7}{6}$ for
$n-1\leq t\leq \bar t$.

We first prove that, for $n-1\leq t\leq\bar t$,
$\frac{v^{t}_{0}}{u^{t}_{n+1}}\leq
e(\frac{2}{m})^{\frac{n-1}{2}}$. It is straightforward that
$v^{n-1}_0=2^{n-1}$ and $u^{n-1}_{n+1}=v^{n-1}_{n+1}\geq
(2m)^{\frac{n-1}{2}}$. Then, $\forall t$ such that
$\frac{v^t_0}{u^t_{n+1}} \leq e(\frac{2}{m})^{\frac{n-1}{2}} =
\frac{3}{4\tilde{\tau}}$, we have $\frac{v^{t+1}_0 /
v^t_0}{u^{t+1}_{n+1} / u^t_{n+1}}= \frac{v^{t+1}_0 /
v^t_0}{v^{t+1}_{n+1} / v^t_{n+1}} \cdot \frac{v^{t+1}_{n+1} /
v^t_{n+1}} {u^{t+1}_{n+1} / u^t_{n+1}} \leq 1\cdot
(1+\frac{(m+1)}{m} \frac{3}{4\tilde{\tau}})\leq
1+\frac{1}{\tilde{\tau}}$ and it takes at least $\tilde{\tau}$
timesteps for $\frac{v^t_0}{u^t_{n+1}}$ to grow by a factor $e$ to
$e(\frac{2}{m})^{\frac{n-1}{2}}$. Thus,
$\frac{v_{n+1}^t}{u_{n+1}^t}= \frac{v_{n+1}^{n-1}}{u_{n+1}^{n-1}}
\cdot\Pi_{\tau=n-1}^{t-1} \frac{v^{\tau+1}_{n+1}/v^{\tau}_{n+1}}
{u^{\tau+1}_{n+1}/u^{\tau}_{n+1}}\leq
1\cdot(1+\frac{1}{\tilde{\tau}})^{\tilde{\tau}\ln(7/6)}\leq
\frac{7}{6}$.

\section{\large Conclusions and Open Problems}
This paper presents a self-contained proof that HITS might require
$h^{\Omega(N\frac{h}{k})}$ iterations to ``get right'' $h$ of the
top $k$ nodes of an $N\geq 2k$ node graph. This translates into
$\Omega(N\frac{h\log h}{k})$ matrix multiplications even using a
``squaring trick''- a substantial load when HITS must be used
on-line on large graphs (e.g. in web search engines).

We conjecture that $g^{\Omega(N\frac{h}{k})}$ is a tight worst
case bound on the iterations required by HITS to converge in rank
on $h$ of the top $k$ ranked nodes of an $N\geq 2k$ node graph of
maximum degree $g$. This is slightly more (for $h$ subpolynomial
in $N$) than the lower bound presented here.

\small \setstretch{0.5}


\end{document}